\documentclass[letterpaper, 10 pt, conference]{ieeeconf}  

\IEEEoverridecommandlockouts                              

\usepackage{graphicx}
\usepackage{amsmath,amssymb,amscd,bbm}
\usepackage{bm}
\usepackage{float}
\usepackage{threeparttable}
\usepackage{color}
\usepackage{geometry}
\usepackage{booktabs}
\usepackage{mathtools}
\usepackage[compress]{cite}
\usepackage{makecell,subfigure}
\newtheorem{theorem}{Theorem}
\newtheorem{proposition}{Proposition}

\newtheorem{definition}{Definition}

\usepackage{hyperref}
\usepackage{algorithm}
\usepackage{algpseudocode}
\title{\LARGE \bf
Maximal quantum leakage: operational interpretation and quantum channel analysis
}

\author{Shuixin Xiao, Zijia Zhao, Jingge Zhu and Farhad Farokhi 
\thanks{This research was supported by the Faculty of Engineering and Information Technology at the University of Melbourne.}
\thanks{The authors are with the Department of Electrical and Electronic Engineering, The University of Melbourne, Parkville, VIC 3010, Australia (e-mail: shuixin.xiao@unimelb.edu.au; zijia.zhao@student.unimelb.edu.au; jingge.zhu@unimelb.edu.au; farhad.farokhi@unimelb.edu.au).}
}

\begin{document}

\maketitle
\thispagestyle{empty}
\pagestyle{empty}


\begin{abstract}
Maximal quantum leakage quantifies privacy against adversaries with arbitrary intentions. In this work, we prove that computing this leakage is equivalent to minimum-error quantum state discrimination with equal priors. This establishes a computable operational interpretation, addressing the previous difficulty in computing maximal quantum leakage. We further analyze the impact of collective measurements on multiple copies of a state, demonstrating that leakage increases monotonically with the number of copies, which leads to explicitly characterizing the maximal leakage in the asymptotic limit. Extending this framework to quantum channels, we develop an iterative algorithm for the jointly designing of input states and measurements. Numerical examples involving collective measurements and the maximal channel leakage demonstrate our theoretical findings.
\end{abstract}

\section{Introduction}
Advantages of quantum computing~\cite{PhysRevLett.85.441}, quantum simulation~\cite{RevModPhys.86.153,QIN20241}, and quantum sensing~\cite{PhysRevResearch.7.L012049,PhysRevA.103.042418} have motivated extensive research in this domain. However, the growing commercial accessibility of quantum technologies raises critical privacy concerns, particularly for sensitive datasets. This necessitates rigorous frameworks to quantify information leakage and design secure, privacy-preserving algorithms~\cite{10115324,8049724,11006077}.

In the classical setting, information theory has paved the way for development of unsuitable metrics for security analysis~\cite{8943950}. This motivated development of maximal leakage as a robust measure tailored specifically for security~\cite{8943950,9457602}. Recent efforts to extend this concept to the quantum domain have led to the proposal of maximal quantum leakage~\cite{Farhad2024}. This measure satisfies the fundamental axiomatic properties required for a rigorous security framework, including positivity, independence, and the post-processing inequality. Building on this foundation, gentle quantum leakage was proposed in~\cite{10886009}, and optimal quantum encoding was investigated in~\cite{farokhi2024optimal}. However, a significant limitation of the formulation in~\cite{Farhad2024} is the absence of a closed-form expression for the leakage. Consequently, computing maximal leakage generally relies on iterative algorithms similar to those used for accessible information~\cite{PhysRevA.71.054303}, which do not possess convergence guarantees to the global optimum. To address this, two alternative measures, Barycentric and pairwise quantum R\'{e}nyi leakages, were proposed in~\cite{Farokhi20242} as upper bounds on maximal quantum leakage, which can be loose generally.

To address this computational challenge and provide a clear operational interpretation, we establish a fundamental link to quantum state discrimination. This framework inherently characterizes distinguishability of quantum states and governs extraction of classical information from quantum systems~\cite{Barnett09,Bae2015}. Consequently, quantum state discrimination is central to information security as it defines the ultimate limit on the information an adversary can retrieve. The rigorous study of optimal discrimination strategies and measurement statistics provides a firm mathematical foundation essential for analyzing security limits~\cite{1055351,Helstrom1969}. By mapping maximal leakage to this well-established framework, we can leverage powerful analytical tools where the optimal strategy is determined by specific figures of merit, such as minimizing the average error or maximizing detection confidence.
\vspace{-1pt}

The main contributions of this work are summarized as follows. We first provide a rigorous proof establishing that maximal quantum leakage is equivalent to the minimum-error quantum state discrimination problem under equal priors, which yields a computable operational interpretation of the leakage. We extend this analysis to collective measurements performed on multiple copies of a quantum state. Collective measurements have been widely investigated in quantum tomography, where they have been shown to extract significantly more information than individual measurement strategies~\cite{Hou2018,PhysRevLett.120.030404,7956181,11158864,xiao2025generalized}. We establish that maximal quantum leakage increases monotonically with the number of available copies and characterize its asymptotic limit.
Furthermore, we generalize this framework to quantify information leakage within quantum channels. Quantum channels represent a fundamental quantum resource and have been widely discussed in the literature~\cite{10384297,xiaoqpt,10383498}.
To address non-convexity of maximal channel leakage problem, we develop an iterative algorithm alternating between optimization of input states and measurement operators. Because the subproblem in each iteration is convex, our algorithm is guaranteed to converge.

The rest of this paper is structured as follows. Section~\ref{sec2} introduces the necessary preliminaries concerning quantum leakage and state discrimination. Section~\ref{sec3} establishes the formal relationship between leakage and state discrimination and includes an analysis of collective measurements. Section~\ref{sec4} presents the analytical framework for quantifying leakage in quantum channels. Numerical results and simulations are discussed in Section~\ref{sec5}, and Section~\ref{sec6} provides concluding remarks.

\section{Preliminary}\label{sec2}
\subsection{Random variables, information measures, quantum states and information}

Random variables are denoted by capital Roman letters, such as $X \in \mathcal{X}$ and $Y \in \mathcal{Y}$. A random variable $X$ is discrete if its alphabet $\mathcal{X}$ is finite. Any discrete random variable $X$ is fully characterized by its probability mass function (PMF) $P_X(x) := {P}\{X = x\}>0$. Throughout this paper, $\log$ denotes the natural logarithm.
The infinite Sibson mutual information \cite{8804205,11078297} is
\begin{equation}
    I_{\infty}^{S}(X; Y) = \log \Big(\sum_{y \in \mathcal{Y}} \max_{x \in \mathcal{X}} P_{Y|X}(y|x)\Big).
\end{equation}
Further details regarding Sibson mutual information can be found in~\cite{8804205,11078297,11240362,10619672}.

Let $\mathcal{H}$ denote a finite dimensional Hilbert space. A quantum state is represented by a density operator $\rho$ belonging to the set $\mathcal{S}(\mathcal{H})$ which consists of positive semidefinite operators from $\mathcal{H}$ to itself with unit trace. This work focuses on states within a $d$ dimensional quantum system. Quantum measurements are characterized by positive operator valued measures (POVMs) defined as a set of operators $\{F_i\}$ satisfying $F_i \ge 0$ and $\sum_i F_i = I$. According to the Born rule, the probability of obtaining measurement outcome $i$ for a state $\rho$ is given by $\operatorname{Tr}(\rho F_i)$. Furthermore, a quantum channel $\mathcal{E}$ is defined as a completely positive and trace preserving map that transforms states from $\mathcal{S}(\mathcal{H}_A)$ to $\mathcal{S}(\mathcal{H}_B)$.

\subsection{Quantum state discrimination}
In quantum state discrimination, there are many different cost functions, including minimum-error discrimination, unambiguous state discrimination, and maximum-confidence discrimination. The minimum-error discrimination problem considers a scenario where a quantum system is prepared in one of $N$ possible states $\{\rho_i\}_{i=1}^N$ with corresponding prior probabilities $\{q_i\}_{i=1}^N$. The objective is to perform a measurement on the system to correctly identify which specific state was prepared. We aim to find the optimal measurement, described by a set of operators $\{M_i\}$ where yielding outcome $i$ dictates our guess that the state is $\rho_i$, that maximizes the overall probability of a successful identification. This maximum success probability is given by~\cite{Barnett09,Bae2015}
\begin{equation}
\begin{aligned}
    P_{\text{guess}} = \max_{\{M_i\}} \; & \sum_{i=1}^N q_i \text{Tr}(M_i \rho_i) \\
    \text{subject to} \;& \sum_{i=1}^N M_i = I, \quad M_i \geq 0, \,\, \forall i.
\end{aligned}
\end{equation}
The resulting minimum error probability $P_{\text{error}} = 1 - P_{\text{guess}}$. Further discussions on optimal measurement strategies are available in~\cite{Barnett09,Bae2015}. In addition, various performance bounds have been studied in~\cite{PhysRevA.81.042329,PhysRevA.110.042401,barnum2002,4578690}.

\subsection{Maximal quantum information leakage}

In this section, we review the definition of maximal quantum leakage~\cite{Farhad2024}, which quantifies information leakage to an arbitrary eavesdropper. Consider a classical random variable $X \in \mathcal{X}$ representing sensitive data, encoded into a quantum system via the ensemble $ \{q_x, \rho_x\}_{x \in \mathcal{X}}$. An adversary, unaware of the realization of $X$, seeks to estimate a potentially randomized function of $X$, denoted by the random variable $Z$. To do so, the adversary performs a measurement described by  POVMs $ \{F_y\}_{y \in \mathcal{Y}}$, obtaining an outcome $Y$ with probability ${P}(Y = y \mid X = x) = \text{Tr}(\rho_x F_y)$, and subsequently forms an estimator $\hat{Z}$. Maximal quantum leakage captures the maximal multiplicative increase in the adversary's guessing probability, optimized over all possible target variables $Z$ and measurement strategies.

\begin{definition}\label{def1}
(Maximal quantum leakage~\cite{Farhad2024}). The maximal quantum leakage from random variable $X$ through the quantum ensemble $ \{q_x, \rho_A^x\}_{x \in \mathcal{X}}$ is defined as
\begin{align}
\mathcal{Q}(X \to A)_{\rho} &\triangleq \sup_{\{F_y\}} \sup_{Z, \hat{Z}} \log \left( \frac{{P}\{Z = \hat{Z}\}}{\max_{z \in \mathcal{Z}} {P}\{Z = z\}} \right) \label{eq:3.1a} \\
&= \max_{\{F_y\}} I_{\infty}^{S}(X; Y) \label{eq:3.1b} \\
&= \max_{\{F_y\}} \log \left( \sum_{y \in \mathcal{Y}} \max_{x \in \mathcal{X}} \text{Tr} (\rho_x F_y) \right), \label{eq:3.1c}
\end{align}
where the maxima are taken over all random variables $Z$ and estimators $\hat{Z}$ with an arbitrary finite support, and all POVMs $\{F_y\}$ with arbitrary finite outcomes $\mathcal{Y}$.
\end{definition}

While Ref.~\cite{Farhad2024} suggests that computing this quantity generally requires iterative algorithms due to the lack of an explicit formula, we demonstrate in this paper that the leakage computation can be formulated as a semidefinite programming (SDP), guaranteeing an optimal solution.

\section{Equivalence of maximal leakage and discrimination, and collective measurement}\label{sec3}

\subsection{Equivalence of leakage and state discrimination}

We first establish the following theorem to demonstrate the equivalence between the maximal guessing probability which is related to information leakage, and the optimal success probability in minimum-error state discrimination.

\begin{theorem}\label{theorem1}
The optimization over arbitrary POVMs followed by optimal classical post-processing is equivalent to the standard minimum-error state discrimination problem:
\begin{equation}\label{eq:main}
    \max_{\{F_y\}} \sum_{y \in \mathcal{Y}} \max_{x \in \mathcal{X}} \left( q_x \mathrm{Tr}(\rho_x F_y) \right)
    = 
    \max_{\{M_{x}\}} \sum_{x \in \mathcal{X}} q_x \mathrm{Tr}(\rho_x M_x),
\end{equation}
where the left-hand side describes a general POVM $\{F_y\}$ with arbitrary outcomes $y$ combined with an optimal classical guessing strategy for $x$, while the right-hand side describes a POVM $\{M_x\}$ with the outcomes directly corresponding to the state indices $x$. Here, $\{q_x\}$ is the prior distribution with $\sum_x q_x =1$.
\end{theorem}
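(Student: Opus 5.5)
The plan is to prove the identity (\ref{eq:main}) by establishing the two inequalities separately. Both directions follow from the fact that measuring and then applying a deterministic guessing rule is itself a measurement.

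\textbf{Direction ``$\ge$''.} Take any POVM $\{M_x\}_{x\in\mathcal{X}}$ on the right-hand side and use it as the POVM on the left-hand side, with outcome set $\mathcal{Y}=\mathcal{X}$ and $F_y=M_y$. For each outcome $y$ we have
\[
\max_{x}\, q_x\mathrm{Tr}(\rho_x F_y)\ge q_y\mathrm{Tr}(\rho_y M_y).
\]
Summing over $y$ shows that the left-hand objective is at least $\sum_x q_x\mathrm{Tr}(\rho_x M_x)$. Taking the maximum over $\{M_x\}$ gives the inequality.

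\textbf{Direction ``$\le$''.} Fix an arbitrary POVM $\{F_y\}_{y\in\mathcal{Y}}$ with finite $\mathcal{Y}$. For each $y$, choose a maximizer
\[
g(y)\in\arg\max_{x} q_x\mathrm{Tr}(\rho_x F_y),
\]
breaking ties by any fixed ordering of $\mathcal{X}$; this is the optimal classical guessing rule. Define the coarse-grained operators
\[
M_x=\sum_{y:\,g(y)=x}F_y .
\]
Empty sums give $M_x=0$, which is still a valid POVM element. Each $M_x$ is positive semidefinite as a sum of positive semidefinite operators. The sets $\{y: g(y)=x\}$ partition $\mathcal{Y}$, so $\sum_x M_x=\sum_y F_y=I$, and $\{M_x\}$ is a valid POVM. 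By linearity of the trace,
\[
\sum_x q_x\mathrm{Tr}(\rho_x M_x)=\sum_x\sum_{y:g(y)=x}q_x\mathrm{Tr}(\rho_x F_y)=\sum_y q_{g(y)}\mathrm{Tr}(\rho_{g(y)}F_y)=\sum_y\max_x q_x\mathrm{Tr}(\rho_xF_y).
\]
So every value attained on the left-hand side is attained on the right-hand side, which gives ``$\le$'' after maximizing over $\{F_y\}$.

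\textbf{Attainment of the maxima.} I expect this to be the only real obstacle, and it is mild. The right-hand side maximizes a linear functional over a compact convex set, so the maximum is attained. Because the left-hand side allows an arbitrary finite number of outcomes, its feasible set is not compact a priori. The two inequalities above show that every left-hand value is bounded by the right-hand maximum and that the right-hand maximum is itself achieved on the left, with $\mathcal{Y}=\mathcal{X}$. Hence the supremum on the left is attained and equals the right-hand side, which justifies writing $\max$.

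As a remark tying back to Definition~\ref{def1}: with equal priors $q_x=1/|\mathcal{X}|$, the theorem gives $\mathcal{Q}(X\to A)_\rho=\log\big(|\mathcal{X}|\,P_{\text{guess}}\big)$. Since $P_{\text{guess}}$ is an SDP, this makes the leakage computable by semidefinite programming.
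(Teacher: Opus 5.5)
Your proposal is correct and follows essentially the same proof as the paper. For ``$\le$'' you coarse-grain $\{F_y\}$ by the MAP decision rule, exactly as the paper does. For ``$\ge$'' you feed the right-hand POVM into the left-hand side, which is the trivial case of the paper's refinement $\sum_k F_{x,k}=M_x$. Your explicit argument that the left-hand supremum is attained (at $\mathcal{Y}=\mathcal{X}$) is a small rigor gain, since the paper leaves this implicit.
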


\begin{proof}
First, we show that the left-hand side (LHS) is upper bounded by the right-hand side (RHS). Let $\{F_y\}$ be an arbitrary POVM used in the maximization on the LHS. 
For each measurement outcome $y$, the optimal decision rule is to choose any $x$ maximizing
$q_x\mathrm{Tr}(\rho_x F_y)$, which is equivalently the rule that maximizes the posterior probability of 
$x$ given
$y$. We define this decision rule $\delta(y)$ as
\begin{equation}
    \delta(y) \in  \arg\max_x \left(q_x \mathrm{Tr}(\rho_x F_y) \right).
\end{equation}
Using this decision rule, we construct a new set of operators $\{M_x\}$ by grouping the outcomes $y$ that correspond to the same guess $x$ 
\begin{equation}
M_x \triangleq \sum_{y : \delta(y) = x} F_y.
\end{equation}
Since $\{F_y\}$ is a valid POVM, the constructed set $\{M_x\}$ satisfies $M_x \ge 0$ and $\sum_x M_x = I$, making it a valid candidate for the state discrimination problem on the RHS. By the definition of $\delta(y)$, we can rewrite the LHS as
\begin{equation}
\begin{aligned}
    \sum_{y\in\mathcal Y}\max_{x\in\mathcal X}\left(q_x \mathrm{Tr}(\rho_x F_y)\right)
    &= \sum_y q_{\delta(y)} \mathrm{Tr}(\rho_{\delta(y)} F_y) \\
    &= \sum_x q_x \mathrm{Tr}\Big( \rho_x \sum_{y : \delta(y) = x} F_y \Big)\\
    &= \sum_x q_x \mathrm{Tr}(\rho_x M_x).
\end{aligned}
\end{equation}
This implies that any value achievable on the LHS can be matched by a valid choice of $\{M_x\}$ on the RHS, establishing that $\text{LHS} \le \text{RHS}$.

{Let $\{M_x\}_{x\in\mathcal{X}}$ be an arbitrary POVM feasible for the RHS problem. For each $x\in\mathcal X$, choose a finite family of positive operators $\{F_{x,k}\}_k$ such that
\begin{equation}
\sum_k F_{x,k}=M_x, \quad F_{x,k}\ge 0.
\end{equation}
Treating each pair $(x,k)$ as a measurement outcome $y$, we can construct any arbitrary POVM $\{F_y\}_{y\in\mathcal Y}$. We then have
\begin{equation}\label{eq:lhs_lowerbound}
\begin{aligned}
\!\!\sum_{y\in\mathcal Y}\max_{x'\in\mathcal X}\left(q_{x'}\mathrm{Tr}(\rho_{x'}F_y)\right)\!
&=\!\!\sum_{x\in\mathcal X}\sum_k
\max_{x'\in\mathcal X}\left(q_{x'}\mathrm{Tr}(\rho_{x'}F_{x,k})\right)\\
&\ge
\sum_{x\in\mathcal X}\sum_k q_x\mathrm{Tr}(\rho_xF_{x,k})\\
&=
\sum_{x\in\mathcal X} q_x\mathrm{Tr}\!\left(\rho_x\sum_k F_{x,k}\right)\\
&=
\sum_{x\in\mathcal X} q_x\mathrm{Tr}(\rho_xM_x).
\end{aligned}
\end{equation}
Since the LHS is the maximum over all POVMs $\{F_y\}$, it follows that
\begin{equation}
\text{LHS}\ge \sum_{x\in\mathcal{X}} q_x\mathrm{Tr}(\rho_x M_x).
\end{equation}
As this holds for every feasible POVM $\{M_x\}$ in the RHS problem, taking the maximum over $\{M_x\}$ yields $\text{LHS}\ge \text{RHS}$.
}
Finally, combining both directions, the equality holds.
\end{proof}

Using Theorem \ref{theorem1}, if the prior distribution is uniform, i.e., $q_x=\frac{1}{N}$ for all $x$ where $N \triangleq |\mathcal{X}|$ in this paper, we have
\begin{equation}
    \max_{\{F_y\}} \sum_{y \in \mathcal{Y}} \max_{x \in \mathcal{X}} \left[ \mathrm{Tr}(\rho_x F_y)\right]
    = 
    \max_{\{M_x\}} \sum_{x \in \mathcal{X}} \mathrm{Tr}(\rho_x M_x).
\end{equation}
Then with~\eqref{eq:3.1a}--\eqref{eq:3.1c}, we have
\begin{equation}\label{npguess}
\mathcal{Q}(X \to A)_{\rho}=\log (N P_{\text{guess}} ).
\end{equation}
Therefore, finding the optimal measurement for maximal quantum information leakage is equivalent to finding the optimal measurement for state discrimination with a uniform prior (often considered the worst-case scenario). This problem can be solved efficiently via SDP. While Theorem 2 of \cite{Farhad2024} suggests that the maximal quantum leakage is attained by a POVM with up to $d^2$ rank-1 elements, our Theorem \ref{theorem1} implies that the optimum can always be attained by a POVM with only $\min\{N,d^2\}$ elements.

\subsection{Collective measurement of quantum information leakage}

Let $\mathcal{Q}_n(X \to A)$ denote the maximal quantum leakage for $n$ copies of the quantum states, defined as
\begin{equation}
    \mathcal{Q}_n(X \to A) \triangleq \log \left( \max_{\{F_y\}} \sum_{y\in {\mathcal{Y}}} \max_{x \in \mathcal{X}} \mathrm{Tr}(\rho_x^{\otimes n} F_y) \right),
\end{equation}
where $\{F_y\}_{y\in \mathcal{Y}}$ represents a POVM on the $n$-copy Hilbert space.
Applying Theorem \ref{theorem1}, we can reformulate this as
\begin{equation}
    \mathcal{Q}_n(X \to A) = \log \left( \max_{\{E_x\}} \sum_{x \in \mathcal{X}} \mathrm{Tr}(\rho_x^{\otimes n} E_x) \right),
\end{equation}
where $\{E_x\}_{x \in \mathcal{X}}$ is a POVM acting on the $n$-copy Hilbert space, which allows for entangled measurements.

We then propose the following proposition to characterize $\mathcal{Q}_n(X \to A)$ as a function of  $n$.
\begin{proposition}\label{col1}
    The maximal quantum leakage $\mathcal{Q}_n(X \to A)$ is a {non-decreasing function of $n$} and is bounded by
    \begin{equation}
        \mathcal{Q}_n(X \to A) \leq  \min\{\log(N), \log(d^{2n})\}.
    \end{equation}
\end{proposition}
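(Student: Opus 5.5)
We work with the discrimination form of $\mathcal{Q}_n$ obtained from Theorem~\ref{theorem1}, namely $\mathcal{Q}_n(X\to A)=\log\big(\max_{\{E_x\}}\sum_{x}\mathrm{Tr}(\rho_x^{\otimes n}E_x)\big)$. The proposition has two parts, monotonicity and the bound, and we treat them separately.

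\emph{Monotonicity.} Any POVM on $n$ copies can be lifted to $n+1$ copies by ignoring the last copy. Given an optimal $\{E_x\}$ for $n$ copies, set $E'_x \triangleq E_x\otimes I_d$. This is again a valid POVM, since $E'_x\ge 0$ and $\sum_x E'_x = I$. Moreover,
\begin{equation*}
\mathrm{Tr}(\rho_x^{\otimes (n+1)}E'_x)=\mathrm{Tr}(\rho_x^{\otimes n}E_x)\,\mathrm{Tr}(\rho_x)=\mathrm{Tr}(\rho_x^{\otimes n}E_x).
\end{equation*}
The $(n+1)$-copy maximum is therefore at least the $n$-copy maximum. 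Because $\log$ is increasing, $\mathcal{Q}_{n+1}\ge\mathcal{Q}_n$.

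\emph{Bound $\log N$.} Each term satisfies $\mathrm{Tr}(\rho_x^{\otimes n}E_x)\le \mathrm{Tr}(\rho_x^{\otimes n})\,\|E_x\|_\infty\le 1$, because $0\le E_x\le I$. Summing over the $N$ values of $x$ gives $\sum_x\mathrm{Tr}(\rho_x^{\otimes n}E_x)\le N$, hence $\mathcal{Q}_n\le \log N$. (Equivalently, $N P_{\text{guess}}\le N$.)

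\emph{Bound $\log d^{2n}$.} Here the argument needs more care, and there are two routes. The simplest is the operator-norm bound $\mathrm{Tr}(\rho_x^{\otimes n}E_x)\le \|\rho_x^{\otimes n}\|_\infty\,\mathrm{Tr}(E_x)\le \mathrm{Tr}(E_x)$. Summing over $x$ gives $\sum_x \mathrm{Tr}(E_x)=\mathrm{Tr}(I_{d^n})=d^n$. This yields the stronger bound $\log d^n$, which in particular is at most $\log d^{2n}$.

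The second route matches the phrasing of the statement through the remark after Theorem~\ref{theorem1}. By Theorem~2 of~\cite{Farhad2024}, the leakage-optimal POVM on the $d^n$-dimensional space can be taken with at most $(d^n)^2=d^{2n}$ rank-one elements. Each term $\max_x\mathrm{Tr}(\rho_x^{\otimes n}F_y)$ is at most $1$, so the sum is at most $d^{2n}$.

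We regard this dimension bound as the only step that needs thought: one must pick either the trace-of-identity argument or the cardinality argument. We would present the trace argument, since it is self-contained, and note that it implies the stated $\log(d^{2n})$. Combining the two bounds gives $\mathcal{Q}_n\le\min\{\log N,\log d^{2n}\}$.
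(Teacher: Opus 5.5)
Your proof is correct. The monotonicity argument is the same as the paper's: tensor the optimal $n$-copy POVM with $I$ and use $\mathrm{Tr}(\rho_x)=1$.

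For the bound, the paper proves nothing directly. It invokes Proposition~2 of \cite{Farhad2024}, the single-copy bound $\mathcal{Q}_1\le\min\{\log N,\log d^2\}$, and applies it to the ensemble $\{\rho_x^{\otimes n}\}$ on the $d^n$-dimensional space. That is essentially your second, citation-based route.

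Your first route is genuinely different and self-contained. You get the $\log N$ bound from $E_x\le I$. You get the dimension bound from $\mathrm{Tr}(\rho_x^{\otimes n}E_x)\le\|\rho_x^{\otimes n}\|_\infty\,\mathrm{Tr}(E_x)$, summed to $\mathrm{Tr}(I_{d^n})=d^n$. This removes the dependence on the external result. More importantly, it gives a strictly sharper conclusion, $\mathcal{Q}_n\le\min\{\log N,\, n\log d\}$. This shows that the exponent $2n$ in the stated bound is loose; it presumably mirrors the $d^2$ count of optimal POVM elements. The paper's route gains only brevity. When you write this up, say explicitly that your argument strengthens the proposition, not merely that it implies it.
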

\begin{proof}
Let $\{\tilde{E}_x^{(n)}\}$ be the optimal POVM that achieves the maximal leakage for $n$ copies, i.e.,
\begin{equation}
    \mathcal{Q}_n(X \to A) = \log \left(\sum_{x \in \mathcal{X}} \mathrm{Tr}\left( \rho_x^{\otimes n} \tilde{E}_x^{(n)} \right)\right).
\end{equation}
We construct a specific POVM $\{E_x^{(n+1)}\}$ acting on $n+1$ copies by appending the identity operator $I$ to the optimal $n$-copy measurement:
\begin{equation}\label{exn}
    \tilde E_x^{(n+1)} \triangleq \tilde{E}_x^{(n)} \otimes I,
\end{equation}
and $\{\tilde E_x^{(n+1)}\}$ forms a POVM.
The quantum leakage for this specific strategy on $n+1$ copies is
\begin{equation}
\begin{aligned}
    &\sum_{x \in \mathcal{X}} \mathrm{Tr}\left( (\rho_x^{\otimes n} \otimes \rho_x) (\tilde{E}_x^{(n)} \otimes I) \right)\\ 
    =& \sum_{x \in \mathcal{X}} \mathrm{Tr}\left( \rho_x^{\otimes n} \tilde{E}_x^{(n)} \right) {\mathrm{Tr}(\rho_x)}
    =\exp (\mathcal{Q}_n(X \to A)).
    \end{aligned}
\end{equation}
Since $\mathcal{Q}_{n+1}$ is defined as the maximization over all possible POVMs on $n+1$ copies, it must be greater than or equal to the value yielded by any specific choice as \eqref{exn}. Therefore, we have
\begin{equation}
    \mathcal{Q}_{n+1} \ge \log \Big(\sum_{x \in \mathcal{X}} \mathrm{Tr}\left( \rho_x^{\otimes (n+1)} \tilde E_x^{(n+1)} \right)\Big) = \mathcal{Q}_n.
\end{equation}
Using Proposition 2 in \cite{Farhad2024} for one-copy scenario, we have $\mathcal{Q}_1(X \to A) \leq \min\{\log(N), \log(d^{2})\}$.
Since the dimension of n-copy state $\rho^{\otimes n}$ is $d^n$, we have $\mathcal{Q}_n(X \to A) \leq \min\{\log(N), \log(d^{2n})\}$.
\end{proof}

Next, we analyze the asymptotic limit using the connection to quantum fidelity. We adopt the standard definition of fidelity between two states
\begin{equation}
    F(\rho_i, \rho_j) = \left( \operatorname{Tr}\sqrt{\sqrt{\rho_i}\rho_j\sqrt{\rho_i}} \right)^2.
\end{equation}
Existing literature~\cite{barnum2002,4578690} provides bounds on the minimum error probability for state discrimination in terms of fidelity
\begin{align}
    P_{\mathrm{error}} &\leq \sum_{i \neq j} \sqrt{p_i p_j} \sqrt{F(\rho_i, \rho_j)}, \label{lower1} \\
    P_{\mathrm{error}} &\geq \frac{1}{2} \sum_{i \neq j} p_i p_j F(\rho_i, \rho_j). \label{lower2}
\end{align}
In the context of collective measurement on $n$ copies, the states become $\rho_i^{\otimes n}$, and the fidelity scales as $F(\rho_i^{\otimes n}, \rho_j^{\otimes n}) = (F(\rho_i, \rho_j))^n$. 
{Let the priors be uniform, $p_i = \frac{1}{N}$ (where $N=|\mathcal{X}|$). Applying \eqref{npguess} and \eqref{lower1}--\eqref{lower2},} we derive the following bounds on $\mathcal{Q}_n(X \to A)$:
\begin{align}
    N - \exp(\mathcal{Q}_n(X \to A)) &\geq \frac{1}{2N} \sum_{i\neq j} \left( F(\rho_i, \rho_j) \right)^{n}, \label{up}\\
    N - \exp(\mathcal{Q}_n(X \to A)) &\leq  \sum_{i \neq j} \left( F(\rho_i, \rho_j) \right)^{n/2}. \label{lb}
\end{align}
For any pair of distinct states, $F(\rho_i, \rho_j) < 1$. Consequently, as $n \to \infty$, the terms on the right-hand side of \eqref{up} and \eqref{lb} vanish. This implies that $\exp(\mathcal{Q}_n(X \to A))$ converges to $N$. Using Proposition \ref{col1}, $\mathcal{Q}_n(X \to A)$ increases monotonically and saturates at $\log(N)$, confirming that perfect distinguishability is unsurprisingly achievable in the asymptotic limit of infinite copies.

\section{Maximal leakage for quantum channels}\label{sec4}
We now extend the proposed framework to analyze the information leakage of quantum channels. Consider a scenario where a classical input $x \in \mathcal{X}$ determines which quantum channel $\mathcal{E}_x$ is applied to a system $A$. This ensemble of operations, denoted by $\{q_x, \mathcal{E}_x\}_{x \in \mathcal{X}}$, represents the quantum encoding of classical data. The objective is to design a universal input state $\rho$ and a measurement strategy $\{F_y\}$ that maximize the adversary's probability of correctly identifying $x$. Following the operational interpretation established in Definition~\ref{def1}, we formalize the maximal quantum leakage for channels as follows.

\begin{definition}
The maximal quantum leakage from a random variable $X$ through the quantum channel ensemble $\{q_x, \mathcal{E}_x\}_{x \in \mathcal{X}}$ is defined as:
\begin{align}
\mathcal{Q}(X \to A)_{\mathcal{E}} & \triangleq \max_{\rho} \max_{\{F_y\}} \sup_{Z, \hat{Z}} \log \left[ \frac{{P}[Z = \hat{Z}]}{\max_{z \in \mathcal{Z}} {P}[Z = z]} \right] \\
&= \max_{\rho} \max_{\{F_y\}} I^{S}_\infty(X; Y) \\
&= \max_{\rho} \max_{\{F_y\}} \log \left( \sum_{y \in \mathcal{Y}} \max_{x \in \mathcal{X}} \mathrm{Tr}(\mathcal{E}_{x}(\rho) F_y) \right),
\end{align}
where the maximization is performed over all input states $\rho$, all POVMs $\{F_y\}$ with arbitrary finite outcomes $\mathcal{Y}$, and all random variables $Z$ and estimators $\hat{Z}$ with finite support.
\end{definition}

Based on the operational equivalence proved in Theorem \ref{theorem1}, the leakage can be reformulated as an optimization problem involving minimum-error discrimination of the output states
\begin{align}\label{epm}
\mathcal{Q}(X \to A)_{\mathcal{E}} &=\max_{\rho} \max_{\{M_x\}}
\log \left(\sum_{x \in \mathcal{X}} \mathrm{Tr}\left(\mathcal{E}_{x}(\rho) M_x \right)\right).
\end{align}

To develop a computational algorithm, it is advantageous to utilize the adjoint channel formulation. Let $V$ denote the objective function within the logarithm
\begin{equation}
  V \triangleq \max_{\rho} \max_{\{M_x\}} \sum_{x \in \mathcal{X}} \mathrm{Tr}(\mathcal{E}_x(\rho) M_x),
\end{equation}
and $\mathcal{Q}(X \to A)_{\mathcal{E}}=\log(V)$.
Recall that the adjoint map $\mathcal{E}^\dagger$ is defined by the relation $\mathrm{Tr}(B \mathcal{E}(A)) = \mathrm{Tr}(\mathcal{E}^\dagger(B) A)$ \cite{watrous2018theory}. Because the variables $\{M_x\}$ and $\rho$ are independent, we can interchange the order of maximization to express $V$ as
\begin{equation}
  V = \max_{\{M_x\}} \max_{\rho} \mathrm{Tr}\left( \rho \sum_{x \in \mathcal{X}} \mathcal{E}_x^\dagger(M_x) \right).
\end{equation}
For any Hermitian operator $R$, the maximization $\max_{\rho} \mathrm{Tr}(\rho R)$ is achieved by projecting onto the eigenspace associated with the largest eigenvalue of $R$~\cite{horn_johnson_2012}, i.e., $\max_{\rho} \mathrm{Tr}(\rho R) = \|R\|_{\infty} = \lambda_{\max}(R)$. Applying this property with $R = \sum_{x} \mathcal{E}_x^\dagger(M_x)$, we obtain
\begin{equation}
\label{eq:optimization_objective}
 V = \max_{\{M_x\}} \left\| \sum_{x \in \mathcal{X}} \mathcal{E}_x^\dagger(M_x) \right\|_\infty.
\end{equation}
The optimization problem in \eqref{eq:optimization_objective} requires finding a POVM $\{M_x\}$ that maximizes the spectral norm of the sum of the adjoint channels. Since this joint optimization is generally non-convex, we propose Algorithm \ref{alg:iterative_leakage} to find a potentially suboptimal by iteratively optimizing over quantum states and POVMs. Because the subproblem in each iteration is convex, our algorithm is guaranteed to converge.

\begin{algorithm}[htbp]
\caption{Iterative joint optimization for maximal quantum channel leakage}
\label{alg:iterative_leakage}
\begin{algorithmic}[1]
\Require Set of quantum channels $\{\mathcal{E}_x\}_{x\in \mathcal{X}}$, convergence tolerance $\epsilon > 0$.
\Ensure Channel leakage $V^*$, input state $\rho^*$, and POVM $\{M_x^*\}$.

\State \textbf{Initialization:}
\State Select a random initial pure state $\rho^{(0)}$.
\State Set iteration counter $k \gets 1$ and initial value $V^{(0)} \gets 0$.

\Repeat
    \State \textit{// Step A: Fix state, optimize measurement (SDP)}
    \State Solve the following SDP for fixed $\rho^{(k-1)}$:
    \State \quad $\{M_x^{(k)}\} \gets \arg \max_{\{M_x\}} \sum_{x} \operatorname{Tr}(\mathcal{E}_x(\rho^{(k-1)}) M_x)$
    \State \quad \quad \quad s.t. $M_x \geq 0 \; \forall x, \sum_{x} M_x = I$.

    \State \textit{// Step B: Fix measurement, optimize state (eigenvalue Problem)}
    \State Compute the adjoint operator sum:
    \State \quad $R^{(k)} \gets \sum_{x} \mathcal{E}_x^\dagger(M_x^{(k)})$.
    \State Compute the largest eigenvalue $\lambda_{\max}$ and eigenvector $|v_{\max}\rangle$ of $R^{(k)}$.
    \State Update state: $\rho^{(k)} \gets |v_{\max}\rangle\langle v_{\max}|$.
    
    \State Update objective value: $V^{(k)} \gets \lambda_{\max}$.
    \State $\delta \gets |V^{(k)} - V^{(k-1)}|$.
    \State $k \gets k + 1$.
\Until{$\delta < \epsilon$}

\State \Return $V^* = V^{(k-1)}$, $\rho^* = \rho^{(k-1)}$, $\{M_x^*\} = \{M_x^{(k-1)}\}$.
\end{algorithmic}
\end{algorithm}

\section{Numerical examples}\label{sec5}

In this section, we present numerical simulations to demonstrate our theoretical findings regarding state leakage under collective measurements and maximal channel leakage. We solved all convex optimization problems using the SDPT3 solver \cite{tutuncu2003solving} within the CVX toolbox \cite{cvx}.

\subsection{Maximal state leakage using collective measurements}

We first investigate the behavior of quantum leakage under collective measurements performed on $n$ identically prepared states. We consider an ensemble consisting of $N=3$ qubit quantum states as
\begin{align}
    |\psi_1\rangle = \begin{pmatrix} 1 \\ 0 \end{pmatrix}, 
    |\psi_2\rangle = \begin{pmatrix} \cos(\frac{\pi}{8}) \\ \sin(\frac{\pi}{8}) \end{pmatrix},  
    |\psi_3\rangle = \begin{pmatrix} \sqrt{0.1} \\ \sqrt{0.9} \end{pmatrix},
\end{align}
with associated density operators $\{\rho_i=|\psi_i\rangle\langle\psi_i|\}_{i=1}^{3}$. We compute the maximal quantum leakage $\mathcal{Q}_n(X \to A)$ for $n \in \{1, \dots, 7\}$. The theoretical upper and lower bounds are derived according to \eqref{up} and \eqref{lb}, respectively.

The simulation results are illustrated in Fig.~\ref{fig:leakage}. It is observed that for $n=1$, the lower bound expression yields a negative value ($3-2 \sum_{i>j} ( F(\rho_i, \rho_j) )^{n/2} < 0$) and thus the corresponding point is omitted from Fig.~\ref{fig:leakage}. The leakage value increases monotonically with $n$, corroborating Proposition~\ref{col1} and confirming that access to multiple copies of the state enhances the adversary's discrimination capability. Furthermore, as $n$ increases, the tensor product states $\rho_i^{\otimes n}$ become increasingly orthogonal. Thus, the leakage asymptotically approaches the limit $\log(3) \approx 1.0986$, indicating that the states become perfectly distinguishable in the limit of large $n$.

\begin{figure}[htbp]
    \centering
    \includegraphics[width=0.45\textwidth]{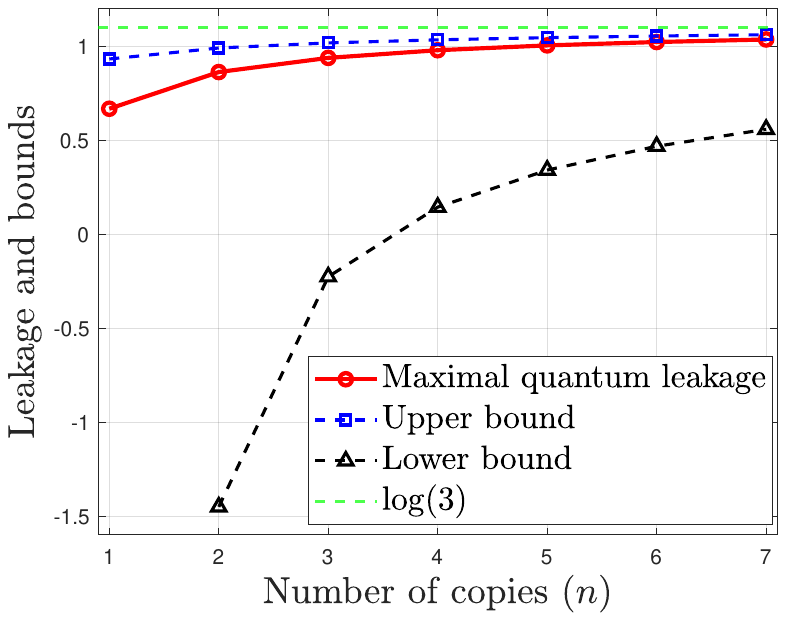}
    \caption{Maximal quantum leakage as a function of the number of copies $n$. The red solid line represents the calculated leakage, bounded by the theoretical limits (blue and black dashed lines) derived in \eqref{up} and \eqref{lb}. The value approaches the  limit $\log(3)$ (green dashed line) as $n$ increases.}
    \label{fig:leakage}
\end{figure}

\subsection{Quantum channel leakage}

Here we evaluate the maximal quantum channel leakage. We consider a set of three unitary channels $\{U_1, U_2, U_3\}$ acting on a single qubit
\begin{align}
    U_1 &= I = \begin{pmatrix} 1 & 0 \\ 0 & 1 \end{pmatrix}, \\
    U_2 &= \exp\left(-i \frac{\pi}{4} \sigma_x\right) = \frac{1}{\sqrt{2}} \begin{pmatrix} 1 & -i \\ -i & 1 \end{pmatrix}, \\
    U_3 &= \exp\left(-i \frac{\pi}{4} \sigma_z\right) = \begin{pmatrix} e^{-i\pi/4} & 0 \\ 0 & e^{i\pi/4} \end{pmatrix},
\end{align}
where $\sigma_x$ and $\sigma_z$ denote the standard Pauli matrices.

We employ Algorithm \ref{alg:iterative_leakage} to compute the channel leakage $\mathcal{Q}(X \to A)_{\mathcal{E}}$. With the convergence tolerance set to $\epsilon=10^{-10}$, the algorithm converges to a channel leakage value of $0.6271$. The convergence behavior with respect to the iteration count is illustrated in Fig.~\ref{fig:without}. To assess the robustness of the algorithm, we performed multiple trials using different random initial pure states. In almost all cases, the algorithm consistently converged to a leakage value of approximately $0.6271$, indicating insensitivity to initialization.

\begin{figure}[htbp]
    \centering \includegraphics[width=0.45\textwidth]{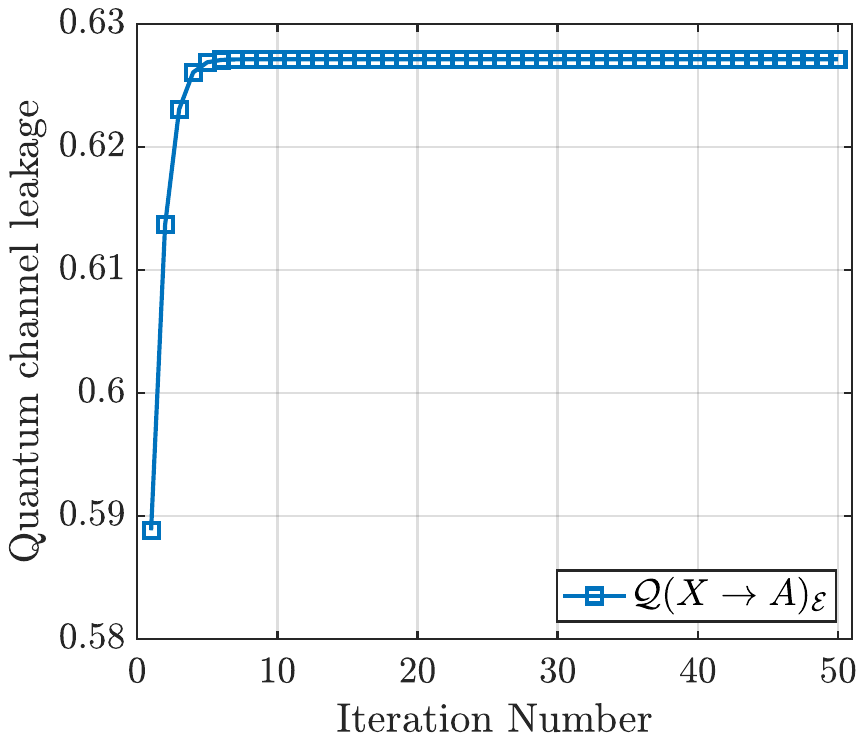}
    \caption{Convergence of Algorithm \ref{alg:iterative_leakage} for quantum channel leakage.}
    \label{fig:without}
\end{figure}

\section{Conclusion}\label{sec6}
In this paper, we have established an operational interpretation of maximal quantum leakage by proving its equivalence to minimum-error quantum state discrimination with equal priors. Our analysis of collective measurements on multiple copies demonstrated that leakage increases monotonically with the number of copies, asymptotically approaching the limit value. Extending this framework to quantum channels, we developed an iterative algorithm for the joint optimization of input states and measurement strategies.  Numerical simulations demonstrated our theoretical findings. Future work will investigate the convergence rate of maximal quantum leakage under collective measurements.

\bibliographystyle{ieeetr}         
\bibliography{leakage} 

\end{document}